\newtheorem{lemma}{Lemma}
\newtheorem{theorem}{Theorem}
\newcommand{\OurProb}{\mbox{{\sc Min-DESC}}}
\newcommand{\NPh}{\mbox{{\cal NP}-hard}}
\newcommand{\DESC}{DESC}
\newcommand{\sdom}{\Rightarrow}
\begin{document}

\title{Parameterized Eulerian Strong Component Arc Deletion Problem on Tournaments}

\author{R. Crowston, G. Gutin, M. Jones and A. Yeo\\
Royal Holloway, University of London, UK \\ \url{{robert|gutin|markj|anders}@cs.rhul.ac.uk}}

\date{}
\maketitle

\begin{abstract}
\noindent
In the problem \OurProb, we are given a digraph $D$ and an integer $k$, and asked whether there exists a set $A'$ of at most $k$ arcs in $D$, such that if we remove the arcs of $A'$, in the resulting digraph every strong component is Eulerian. \OurProb{} is \NPh; Cechl\'{a}rov\'{a} and Schlotter (IPEC 2010) asked whether the problem is fixed-parameter tractable when parameterized by $k$.
We consider the subproblem of \OurProb{} when $D$ is a tournament. We show that this problem is fixed-parameter tractable with respect to $k$.
\end{abstract}

\section{Introduction}\label{section:intro}

For a digraph $D$, let $V(D)$ denote the vertices of $D$ and $A(D)$ the arcs of $D$.
Let $xy$ denote an arc from $x$ to $y$ and let $|D| = |V(D)|$.
Recall that the \emph{out-degree} of a vertex $x$ in a digraph $D$ is the number $d^+_D(x)$ of arcs of the form $xy$, for some vertex $y$, and the \emph{in-degree} of $x$ is the number $d^-_D(x)$ of arcs of the form $yx$.
$D$ is \emph{balanced} if for every vertex $x$ in $D$, $d^+_D(x) = d^-_D(x)$.
$D$ is \emph{regular} if for every pair $x,y$ of vertices, $d^+_D(x) = d^-_D(x) = d^+_D(y) = d^-_D(y).$
For a set $X$ of vertices in $D$, the subgraph $D[X]$ of $D$ \emph{induced} by $X$ is the digraph with $V(D[X]) = X$, $A(D[X]) = \{ xy: x, y \in X, xy \in A(D)\}$.

A digraph $D$ is \emph{strongly connected} if for every $x, y \in V(D)$ there is a directed path in $D$ from $x$ to $y$.
In particular, the digraph consisting of just one vertex is strongly connected.
A \emph{strong component} of $D$ is a maximal induced subgraph $C$ in $D$ that is strongly connected.
 $D$ is \emph{Eulerian} if there is a directed closed trail in $D$ that traverses every vertex of $D$ and uses every arc in $A(D)$ once.
Recall that $D$ is Eulerian if and only if $D$ is strongly connected and balanced.
For $X, Y \subseteq V(D)$ let $X \sdom Y$ denote the fact that all arcs between $X$ and $Y$ go from a vertex in $X$ to a vertex in $Y$. In
particular, we can write $X \sdom Y$ if there are no arcs between $X$ and $Y$.
 If $Y$ is a set of vertices in $D$ or a subgraph of $D$, for $x \in V(D)$ we let $d^+_Y(x)$ denote the number of arcs of the form $xy$, for some vertex $y$ in $Y$, and define $d^-_Y(x)$ similarly. For a positive integer $n$, let $[n]=\{1,\ldots,n\}.$ For more information on digraphs, see \cite{BangJensenGutin08}.

If $A'$ is an arc-set in $D$ then $D-A'$ denotes the subgraph of $D$ obtained by deleting all arcs of $A'$ from $D$.
A set $A' \subseteq A(D)$ is said to be a $\DESC{}$-set
(standing for {\em {\bf D}elete in order to obtain {\bf E}ulerian {\bf S}trong {\bf C}omponents})
if all strong components in $D-A'$ are Eulerian in $D-A'$. The size of a smallest $\DESC{}$-set
in $D$ is denoted by $desc(D)$.
We can now define the problem \OurProb{}.

\begin{quote}
  \OurProb{} \\ \nopagebreak
  \emph{Instance:} A digraph $D$ and a nonnegative integer $k$.\\
    \nopagebreak
\emph{Question:} Decide whether $desc(D) \leq k$.
 \end{quote}

Cechl\'{a}rov\'{a} and Schlotter \cite{CechSchlotter2010} introduced \OurProb{} in the context of housing markets, and asked if the problem is fixed-parameter tractable when parameterized by $k$.

Cygan et al. \cite{CygMarPilPilSch2011} study a number of related problems; in particular, the problem of finding a set of arcs $A'$ such that $D-A'$ is balanced, and the problem of finding a set of arcs $A'$ such that $D-A'$ is Eulerian.
They proved that the first problem is polynomial-time solvable and the second problem is fixed-parameter tractable when the parameter is $k$.
(Cygan et al. also studied related problems with graphs instead of digraphs, and deleting vertices instead of arcs or edges).

Note that unlike these problems, in \OurProb{} we do not require that $D-A'$ is balanced, since we allow arcs between strong components.
This makes it more difficult to say anything about what $D-A'$ must look like locally, as whether a given arc is between two strong components or not depends on the rest of the digraph.

The complexity of \OurProb{} parameterized by $k$ remains open.
In this paper, we consider \OurProb{} in the special case when $D$ is a tournament. This problem is still \NPh{} as proved in \cite{DESCarxiv}.
Since the proof is very long and of relatively little interest to the reader, we have decided to omit it from the paper.
In this paper we prove the following theorem.

\begin{theorem}\label{thmKernel} 
\OurProb{} for tournaments  parameterized by $k$ is fixed-parameter tractable and
has a kernel with at most $4k(4k+2)$ vertices.
\end{theorem}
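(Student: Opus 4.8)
The plan is to produce a polynomial kernel; fixed-parameter tractability is then immediate by brute force on the kernel. Throughout I use two standard facts about a tournament $D$: its strong components admit a linear order $C_1 \sdom C_2 \sdom \cdots \sdom C_m$, and a tournament is Eulerian if and only if it is regular (a balanced tournament forces all in- and out-degrees to equal $(|D|-1)/2$, hence is regular and therefore strongly connected).

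\textbf{Reducing to strongly connected pieces.} Because the condensation of a tournament is a transitive tournament, an arc joining two distinct strong components $C_i \sdom C_j$ is never useful in a $\DESC$-set: it does not affect the in/out-balance inside any strong component, and it is not needed to keep components apart since all arcs between them already point one way; meanwhile, deleting arcs inside a strong component can only split it, never merge it with another. Hence an optimal $\DESC$-set uses only arcs lying inside strong components, and $desc(D) = \sum_{i=1}^m desc(D[C_i])$. This yields the first reduction rule: delete every strong component $C_i$ with $desc(D[C_i])=0$, i.e.\ every $C_i$ that is a regular tournament (in particular every singleton). The rule is polynomial-time testable, and once it no longer applies every strong component has $desc \ge 1$, so a yes-instance has at most $k$ strong components.

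\textbf{Shrinking a strong component.} Fix a strong component $S = D[C_i]$, put $d := desc(S)$, and take an optimal $\DESC$-set $A_S$ with $|A_S| \le d \le k$. If $S - A_S$ has strong components $U_1 \sdom \cdots \sdom U_p$, then every arc of $A_S$ either lies inside some $U_j$ or points ``backwards'' between two of the $U_j$; in particular at most $2d$ vertices are incident with $A_S$, and I call the remaining vertices \emph{generic}. A generic vertex keeps all its arcs, so inside its block $U_j$ it has equal in- and out-degree while all its arcs to the other blocks point forward; consequently a block with many generic vertices is, apart from $O(d)$ vertices, a regular tournament, and a maximal stretch of consecutive blocks all of whose vertices are generic induces a transitive, zero-cost sub-tournament that is a module of $S$. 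The reduction rules for this step replace such oversized regular-like and transitive-like substructures by canonical bounded-size versions carrying the same local data (degrees at the $O(d)$ non-generic vertices and the backward-arc pattern). Each rule is shown safe by a solution-exchange argument: a $\DESC$-set of size $\le k$ in the reduced instance is lifted to one of the same size in the original by re-inserting the removed vertices into the regular core of the relevant block, or into the relevant transitive stretch, and an optimal $\DESC$-set of the original is projected down the same way; since the affected vertices are generic and live in nearly regular blocks, neither direction incurs an extra deletion. When no rule applies, $S$ has at most $4\,desc(S)(4\,desc(S)+2)$ vertices, and a short calculation from $\sum_i desc(D[C_i]) = desc(D) \le k$ gives $\sum_i 4\,desc(D[C_i])(4\,desc(D[C_i])+2) \le 4k(4k+2)$; so a reduced yes-instance has at most $4k(4k+2)$ vertices. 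If the reduced instance is larger, it is a no-instance and we return a trivial one.

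All rules run in polynomial time, so we obtain a kernel on at most $4k(4k+2)$ vertices, and exhausting over all $k$-subsets of its $O(k^4)$ arcs decides the instance; hence $\OurProb$ on tournaments is fixed-parameter tractable. The main obstacle is the shrinking step: determining precisely how large a strong component --- or a single block within it --- can be before it must contain a redundant regular-like or transitive-like substructure, and proving that $desc$ of a nearly regular strongly connected tournament is pinned down by only $O(k)$ bits of local data even though computing $desc$ is $\NPh$ in general. Handling the up to $2d$ non-generic vertices, whose degrees may be far from regular, and checking that re-inserting or deleting vertices never changes which arcs end up between two strong components, is exactly the global-versus-local difficulty highlighted in the introduction.
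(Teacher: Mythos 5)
Your opening reduction is sound and even a little sharper than what the paper uses: an arc between two strong components of $T$ can never help (it stays between strong components after any further deletions and affects no component's balance), so $desc(T)=\sum_i desc(T[C_i])$ and zero-cost (i.e.\ regular) components can be discarded. The superadditivity calculation $\sum_i 4d_i(4d_i+2)\le 4k(4k+2)$ for $\sum_i d_i\le k$ is also fine. But the theorem is not proved, because the entire content of the kernel --- bounding the size of a single strong component with $desc(S)=d$ by $4d(4d+2)$ --- is replaced by a description of rules you do not state ("replace oversized regular-like and transitive-like substructures by canonical bounded-size versions") and a safeness argument you do not give ("a solution-exchange argument\dots neither direction incurs an extra deletion"). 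You acknowledge this yourself in the final paragraph as "the main obstacle." That obstacle is exactly the global-versus-local difficulty the introduction warns about: shrinking a "regular core" inside a strongly connected tournament changes the degrees of every other vertex and can change which arcs end up between strong components of $T-A'$, so there is no reason the reduced instance has the same $desc$; and whether a large strong component must contain such a redundant substructure at all is unproven. As written, the argument would also need every block $U_j$ of $S-A_S$ to be "nearly regular," but a block could consist mostly of non-generic structure spread across many small blocks, which your rules do not address.

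The paper closes this gap by a mechanism absent from your proposal: it sorts $V(T)$ by out-degree in $T$ and proves that out-degrees control the strong components of $T'=T-A'$. Concretely, two vertices in the same strong component of $T'$ have out-degrees within $k$ of each other, while vertices in components separated by $W$ in the component order have out-degrees differing by at least $\frac{|A|+|B|}{2}+|W|-(k+1)$; hence any $4k+3$ vertices with out-degree spread at most $k$ lie in one strong component of $T'$, and that component can be identified \emph{exactly}, its forced incident deletions performed, and its interior balanced optimally by the polynomial algorithm of Cygan et al.\ --- no exchange argument needed. The remaining vertices fall into at most $4k$ degree-classes of size at most $4k+2$ each, because every class is charged to deleted arcs via forced backward arcs along paths in the condensation. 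If you want to salvage your outline, you would need to either prove your per-component bound with explicit, provably safe rules, or adopt the paper's degree-interval lemmas, at which point the component-wise decomposition becomes unnecessary.
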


A \emph{parameterized problem} is a subset $L\subseteq \Sigma^* \times
\mathbb{N}$ over a finite alphabet $\Sigma$. $L$ is
\emph{fixed-parameter tractable} if the membership of an instance
$(I,k)$ in $L$ can be decided in time
$f(k)|I|^{O(1)}$ where $f$ is a function of the
{\em parameter} $k$ only~\cite{DowneyFellows99,FlumGrohe06,Niedermeier06}.
Given a parameterized problem $L$,
a \emph{kernelization of $L$} is a polynomial-time
algorithm that maps an instance $(x,k)$ to an instance $(x',k')$ (the
\emph{kernel}) such that (i)~$(x,k)\in L$ if and only if
$(x',k')\in L$, (ii)~ $k'\leq h(k)$, and (iii)~$|x'|\leq g(k)$ for some
functions $h$ and $g$. The function $g(k)$ is called the {\em size} of the kernel.
It is well-known \cite{DowneyFellows99,FlumGrohe06,Niedermeier06} that a decidable parameterized problem $L$ is fixed-parameter
tractable if and only if it has a kernel. Polynomial-size kernels are of
main interest due to applications \cite{DowneyFellows99,FlumGrohe06,Niedermeier06}, but unfortunately many fixed-parameter problems
do not have such kernels unless  coNP$\subseteq$NP/poly, see, e.g., \cite{BDFH09,BTY09,DLS09}.

\section{Fixed-parameter Tractability Result}\label{sec:fpt}

In this section we will prove Theorem \ref{thmKernel}. To prove this result, consider a tournament $T$,
and assume $desc(T) \leq k$. Let $A' \subseteq A(T)$ be a $\DESC{}$-set for $T$ of size at most $k$ and let $T'=T-A'$.
Let $\succ$ be a linear ordering on the strong components of $T'$, such
that if $A \succ B$ then $A\sdom B$.
For sets of vertices $X$, $Y$, we let $X \succ Y$ denote the fact that
$A \succ B$, for any strong components $A, B$ in $T'$ with $V(A) \cap X \neq \emptyset$, $V(B) \cap Y \neq \emptyset$.

We firstly prove some properties of $T, T'$ and $A'$.
For sets of vertices $X$ and $Y$, let $d^*(X,Y)$ be the number of arcs in $A'$ with one end-vertex in $X$ and the other in $Y$ (in either direction). Let $d^*(X) = d^*(X,X)$, $d^*(x,Y) = d^*(\{x\},Y)$ and $d^*(x)=d^*(x,V(T))$.


\begin{lemma}\label{lem:deg1}
Let $A,B$ be two strong components in $T'$, with $A \succ B$, and let $W$ be
the (possibly empty) set  of vertices in components between $A$ and $B$ 
(i.e., $W$ is the maximal set of vertices disjoint from $A$ and $B$ such that $A \succ W \succ B$).
Suppose that $a \in A$ and $b \in B$. Then $d^+_T(a)-d^+_T(b) \ge \frac{|A|+|B|}{2}+|W|-(k+1)$.
\end{lemma}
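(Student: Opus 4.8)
The plan is to bound $d^+_T(a)$ from below and $d^+_T(b)$ from above, using two facts: since $A'$ is a $\DESC$-set, the strong components $T'[A]$ and $T'[B]$ are Eulerian, hence balanced; and $T'=T-A'$ is obtained by \emph{deleting} arcs, so every arc of $T'$ between two distinct strong components is oriented in accordance with $\succ$.

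First I would fix the partition $V(T)=U\cup A\cup W\cup B\cup Z$, where $U$ (resp.\ $Z$) consists of the vertices lying in strong components of $T'$ that are $\succ A$ (resp.\ $\prec B$); since $\succ$ is a linear order, every vertex lies in exactly one of these five sets, and $U\sdom A\sdom W\sdom B\sdom Z$. Balancedness of $T'[A]$ gives $d^+_{T'[A]}(a)=d^-_{T'[A]}(a)$; since $T[A]$ is a tournament in which exactly $d^*(a,A)$ of the $|A|-1$ arcs at $a$ inside $A$ lie in $A'$, we get $d^+_{T'[A]}(a)=(|A|-1-d^*(a,A))/2$, and similarly $d^+_{T'[B]}(b)=(|B|-1-d^*(b,B))/2$.

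For the lower bound, I would use $d^+_T(a)\ge d^+_{T'}(a)$: in $T'$, the vertex $a$ has no out-neighbour in $U$, has $d^+_{T'[A]}(a)$ out-neighbours in $A$, and for each $v\in W\cup B\cup Z$ the arc between $a$ and $v$ is either directed $a\to v$ in $T'$ or lies in $A'$ (an arc $v\to a$ in $T'$ would contradict $\succ$, as $v$ lies in a component $\prec A$), so at most $d^*(a,W)+d^*(a,B)+d^*(a,Z)$ vertices of $W\cup B\cup Z$ fail to be out-neighbours of $a$ in $T'$. For the upper bound, I would write $d^+_T(b)=d^+_{T'}(b)+|\{v:bv\in A'\}|$, observe that in $T'$ the vertex $b$ has no out-neighbour in $U\cup A\cup W$ and at most $|Z|$ out-neighbours in $Z$, and that $|\{v:bv\in A'\}|\le d^*(b)$. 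Subtracting the two bounds, the $|Z|$ terms cancel, the main terms collapse to $(|A|+|B|)/2+|W|$, and (after discarding the harmless $+\tfrac12 d^*(b,B)$ and replacing halves by wholes) the leftover error is at most $d^*(a)+d^*(b)$.

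The final step — which is also where the exact value $k+1$ comes from — is the remark that $d^*(a)+d^*(b)\le k+1$: every arc of $A'$ is incident to at most one of $a,b$ except the (at most one) arc joining $a$ and $b$, so the sum counts the $\le k$ arcs of $A'$ once and one further arc at most one extra time. I do not anticipate a real obstacle; the only delicate point is the case analysis in the third paragraph, where one must carefully track which arcs can appear in $T$ but not in $T'$, given that $\succ$ orients all inter-component arcs and that $A'$ is deleted rather than reversed. The rest is bookkeeping and elementary arithmetic.
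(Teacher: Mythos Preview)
Your proposal is correct and follows essentially the same approach as the paper: partition $V(T)$ into $U\cup A\cup W\cup B\cup Z$ (the paper writes $R$ for your $U$), use balancedness of $T'[A]$ and $T'[B]$ to get $d^+_{T'[A]}(a)=(|A|-1-d^*(a,A))/2$ and similarly for $b$, bound $d^+_T(a)$ from below and $d^+_T(b)$ from above, and then observe that the total error is at most $k+1$ because each arc of $A'$ is counted at most once except possibly the arc between $a$ and $b$. The only cosmetic difference is that the paper tracks the error terms slightly more tightly (it omits $d^*(a,U)$ and $d^*(b,Z)$ from the sum) whereas you use the cruder $d^*(a)+d^*(b)$; both are bounded by $k+1$ via the same double-counting, so this is immaterial.
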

\begin{proof}
Let $Z$ be the maximal set of vertices for which $B \succ Z$, and let $R$ be the maximal set of vertices for which $R \succ A$.

Note that if $X$ is the vertex set of a strong component in ${T'}$ and $x$ is a vertex in $X$,
then $d_{{T'}[X]}^+(x) = (|X|-1-d^*(x,X))/2$ and $$d_{{T'}[X]}^+(x) \le d_{T[X]}^+(x) \le d_{{T'}[X]}^+(x) + d^*(x,X).$$
Also note that if $X \succ Y$, then the arcs between $X$ and $Y$ in $A'$ will be exactly the arcs from $Y$ to $X$.
Hence, for all $a\in A$, $b\in B$ we have
$$d^+_T(a)-d^+_T(b) \ge \left( \frac{|A|-1-d^*(a,A)}{2} + |W| + |B|+|Z| - d^*(a,W\cup B\cup Z)\right) $$ $$- \left( \frac{|B|-1 + d^*(b,B)}{2}+|Z|+d^*(b,R\cup A\cup W) \right)$$
$$\ge \frac{|A|-|B|}{2}+|W|+|B|-d^*(a,A) - d^*(b,B)-d^*(a,W\cup B\cup Z)-d^*(b,R\cup A\cup W)$$
$$\ge \frac{|A|+|B|}{2}+|W|-(k+1).$$
The last inequality holds since $d^*(a,A)+d^*(b,B)+d^*(a,W\cup B\cup Z)+d^*(b,R\cup A\cup W)\le k+1$,
as there are at most $k$ deleted arcs, with each deleted arc counted at most once,
with a possible exception of an arc from $a$ to $b$, which would be counted twice.
 \end{proof}

\begin{lemma}\label{lem:split}
Suppose there exists $X \subseteq V(T)$ such that $|X| \ge 4k+3$ and
$ \max_{x\in X} d^+_T(x) - \min_{x\in X} d^+_T(x) \le k.$
Then $X$ belongs to one strong component in ${T'}$.
\end{lemma}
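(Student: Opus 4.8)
The plan is to argue by contradiction, using Lemma~\ref{lem:deg1} as essentially the only tool. Suppose $X$ is \emph{not} contained in a single strong component of ${T'}$. Then among the strong components of ${T'}$ that meet $X$ there is a topmost one, $A$, and a bottommost one, $B$, in the ordering $\succ$, with $A \succ B$; let $W$ be the (possibly empty) set of vertices lying in strong components strictly between $A$ and $B$, so that $A \succ W \succ B$. Since $A$ is topmost and $B$ is bottommost among components meeting $X$, every vertex of $X$ lies in $A \cup W \cup B$, and hence $|A| + |W| + |B| \ge |X| \ge 4k+3$.

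Next I would pick any $a \in A \cap X$ and any $b \in B \cap X$. The hypothesis on out-degrees gives $d^+_T(a) - d^+_T(b) \le \max_{x\in X} d^+_T(x) - \min_{x\in X} d^+_T(x) \le k$. On the other hand, Lemma~\ref{lem:deg1} applied to the pair $A, B$ (with the same set $W$) yields $d^+_T(a) - d^+_T(b) \ge \frac{|A|+|B|}{2} + |W| - (k+1)$. Chaining the two inequalities gives $\frac{|A|+|B|}{2} + |W| \le 2k+1$, that is, $|A| + |B| + 2|W| \le 4k+2$.

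Finally, since $|W| \ge 0$ we obtain $|A| + |W| + |B| \le |A| + 2|W| + |B| \le 4k+2 < 4k+3 \le |X| \le |A| + |W| + |B|$, which is a contradiction. Therefore $X$ is contained in one strong component of ${T'}$.

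I do not expect a genuine obstacle in this argument: once Lemma~\ref{lem:deg1} is available the result is a short counting deduction. The only points that need a moment's care are checking that the chosen $A$, $B$, $W$ really satisfy the hypotheses of Lemma~\ref{lem:deg1} (in particular that $W$ is \emph{exactly} the vertex set of the components strictly between $A$ and $B$, and that no vertex of $X$ lies outside $A \cup W \cup B$), and confirming that the threshold $4k+3$ is precisely the value that makes the final chain of inequalities strict.
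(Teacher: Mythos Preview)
Your proof is correct and follows essentially the same approach as the paper's: assume $X$ meets two distinct strong components, sandwich $X$ inside $A\cup W\cup B$, and derive a contradiction by combining the degree gap from Lemma~\ref{lem:deg1} with the hypothesis $\max-\min\le k$. The only cosmetic difference is that the paper bounds $(|A|+|B|)/2+|W|\ge (4k+3)/2$ directly to force $d^+_T(a)-d^+_T(b)>k$, whereas you rearrange to $|A|+|B|+2|W|\le 4k+2$ and then drop one $|W|$; the two computations are equivalent.
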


 \begin{proof}
 For the sake of contradiction assume that the lemma does not hold.
Let $A,B,W$ be disjoint sets of vertices such that $A \succ W \succ B$, $X \subseteq A \cup W \cup B$, and $A$ and $B$ are the vertex sets of strong components in $T'$.
By assumption,
$|A|+|B|+|W| \ge |X| \ge 4k+3$. Applying Lemma~\ref{lem:deg1}, for any $a\in A\cap X$, $b\in B\cap X$, we have
$d^+_T(a)-d^+_T(b)\ge (4k+3)/2-(k+1) > k,$
which contradicts our assumption.
\end{proof}

\begin{lemma}\label{lem:same}
 If $x$ and $y$ belong to the same strong component in ${T'}$ then \newline
$|d^+_T(x)-d^+_T(y)| \le k.$
\end{lemma}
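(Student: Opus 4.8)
The plan is to deduce Lemma~\ref{lem:same} directly from Lemma~\ref{lem:deg1} by a symmetry argument. Suppose $x$ and $y$ lie in the same strong component $C$ of $T'$. We want $|d^+_T(x) - d^+_T(y)| \le k$, so without loss of generality it suffices to bound $d^+_T(x) - d^+_T(y)$ from above by $k$ (the other direction follows by swapping the roles of $x$ and $y$).

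The natural move is to apply Lemma~\ref{lem:deg1} with $A = B = C$, i.e.\ take both strong components to be $C$ itself. Then $A \succ B$ is not literally true (since $A = B$), so I would first check that the proof of Lemma~\ref{lem:deg1} goes through in this degenerate case, or, more cleanly, re-run the degree computation from scratch with $A = B = C$ and $W = \emptyset$. Concretely, for $x, y \in C$ we have $d^+_{T[C]}(x) = (|C| - 1 - d^*(x,C))/2 + \text{(error} \le d^*(x,C))$ and similarly for $y$, while the contributions from vertices outside $C$ are identical for $x$ and $y$ up to the arcs of $A'$ incident to $x$ or $y$ leaving $C$. Subtracting, all the ``bulk'' terms $(|C|-1)/2$ and the out-of-component contributions cancel exactly, leaving only terms of the form $\pm d^*(x, \cdot)$ and $\pm d^*(y, \cdot)$; these are bounded in total by the number of deleted arcs incident to $x$ or $y$, which is at most $k$ (the arc $xy$, if deleted, being counted at most twice, exactly as in the proof of Lemma~\ref{lem:deg1}). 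Hence $d^+_T(x) - d^+_T(y) \le k$, and by symmetry $|d^+_T(x) - d^+_T(y)| \le k$.

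An alternative, if one prefers to quote Lemma~\ref{lem:deg1} verbatim rather than redo the computation, is to note that the lemma's statement only uses $A \succ B$ through the fact that all arcs of $A'$ between $A$ and $B$ go from $B$ to $A$, and through the claim that every deleted arc is counted at most once (twice for an $a$-to-$b$ arc); when $A = B = C$ both of these still make sense with the convention that the relevant arc sets are counted with the stated multiplicities, and one obtains $d^+_T(x) - d^+_T(y) \ge \frac{|C|+|C|}{2} + 0 - (k+1)$ which is the wrong direction and useless. So the clean route really is to redo the short calculation with the bulk terms cancelling, which is why I'd go with the first approach.

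The main obstacle — though it is a minor one — is purely bookkeeping: making sure that when $A = B = C$ the various index sets ($W$, $Z$, $R$ in the notation of Lemma~\ref{lem:deg1}) are correctly collapsed and that no deleted arc incident to $x$ or $y$ is overcounted beyond the factor of two already accounted for. There is no real combinatorial difficulty here; the whole content is that the $(|C|-1)/2$ main terms and the outside-$C$ contributions are the same for $x$ and $y$, so only the $O(k)$ correction terms survive. I expect the write-up to be two or three lines.
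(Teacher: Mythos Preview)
Your proposal is correct and follows essentially the same route as the paper: split $d^+_T(x)-d^+_T(y)$ into the contribution from inside the common strong component $C$ and the contribution from $V\setminus C$, use that $T'[C]$ is balanced to control the first part and that in $T'$ every vertex of $C$ has identical out-degree to $V\setminus C$ to control the second, leaving only $O(k)$ correction terms from deleted arcs. The only cosmetic difference is that the paper bounds the outside-$C$ discrepancy by $d^*(U,V\setminus U)$ (all deleted crossing arcs), whereas you bound it by the deleted crossing arcs incident to $x$ or $y$; both give at most $k$ after adding the inside term $\tfrac{d^*(x,C)+d^*(y,C)}{2}$.
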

\begin{proof} 
Let $H$ be the strong component of $T$ containing $x$ and $y$, and let $V=V(T)$ and $U=V(H).$
Form a digraph $Q$ from $T'$ by reinserting deleted arcs between $V$ and $V\setminus U$ and then reorienting them.
Observe that $d^+_{Q[V\backslash U]}(x) = d^+_{Q[V\backslash U]}(y)$, and thus $|d^+_{T[V\backslash U]}(x)-d^+_{T[V\backslash U]}(y)| \le d^*(U, V\backslash U)$.
Also recall (see the second paragraph in the proof of Lemma \ref{lem:deg1}) that $\frac{|U|-1-d^*(u,U)}{2} \le d^+_{T[U]}(u) \le \frac{|U|-1+d^*(u,U)}{2}$ for any $u \in U$, and thus $|d^+_{T[U]}(x)-d^+_{T[U]}(y)| \le \frac{d^*(x,U) + d^*(y,U)}{2}$. Hence, \newline
$|d^+_T(x)-d^+_T(y)| \le d^*(U, V\backslash U) + \frac{d^*(x,U) + d^*(y,U)}{2} \le k.$
\end{proof}

We also require the following lemma, which is proved in \cite{CygMarPilPilSch2011}.

\begin{lemma} \label{polyAlg}
 Given a directed graph $D$, we can in polynomial time find a set $A' \subseteq A(D)$ of minimal size such that $D-A'$ is balanced.
\end{lemma}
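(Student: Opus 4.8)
The plan is to reduce this to a minimum-cost flow computation. First I would restate what is wanted: writing $\sigma(v) = d^+_D(v) - d^-_D(v)$ for each vertex $v$, an arc set $A' \subseteq A(D)$ has the property that $D - A'$ is balanced if and only if $d^+_{A'}(v) - d^-_{A'}(v) = \sigma(v)$ for every $v$. Such an $A'$ always exists (for instance $A' = A(D)$, since the empty digraph is balanced), and the prescribed values satisfy $\sum_v \sigma(v) = 0$. So the task becomes: find an arc subset of $D$ of minimum cardinality whose out-degree-minus-in-degree equals $\sigma(v)$ at every vertex.

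Next I would model this as a flow problem. Build a network $N$ on vertex set $V(D) \cup \{s,t\}$ by keeping every arc of $D$ with capacity $1$ and cost $1$, adding an arc from $s$ to $v$ of capacity $\sigma(v)$ for each $v$ with $\sigma(v) > 0$, and an arc from $v$ to $t$ of capacity $-\sigma(v)$ for each $v$ with $\sigma(v) < 0$, all auxiliary arcs having cost $0$. Let $F = \sum_{v : \sigma(v) > 0} \sigma(v)$. Since the total capacity leaving $s$ is exactly $F$, any $s$-$t$ flow of value $F$ must saturate all auxiliary arcs; the $D$-arcs carrying one unit of flow then form exactly a set $A'$ with $d^+_{A'}(v) - d^-_{A'}(v) = \sigma(v)$ for all $v$, and conversely every such $A'$ yields such a flow whose cost equals $|A'|$. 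A minimum-cost integral flow of value $F$ in $N$ therefore corresponds to a minimum $A'$, and it exists because $x_a = 1$ on all of $A(D)$ is a feasible flow.

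Finally I would invoke the standard facts that a minimum-cost integral $s$-$t$ flow of a prescribed value can be computed in polynomial time, and that an integral optimum may be assumed because the node-arc incidence matrix of a digraph is totally unimodular (so one may equivalently solve the linear relaxation, or even directly optimize $\sum_a x_a$ over the integral polytope defined by the flow-conservation equalities and $0 \le x_a \le 1$, reading off $A' = \{a : x_a = 0\}$). I do not expect a serious obstacle here: the only things needing care are the bookkeeping of the reduction — the orientation of the auxiliary arcs and the choice of $F$, so that value-$F$ flows correspond bijectively to the desired arc sets — and the appeal to integrality; once those are pinned down the statement follows.
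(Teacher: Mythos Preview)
The paper does not actually prove this lemma: it simply cites Cygan et al.\ \cite{CygMarPilPilSch2011} for the result. Your reduction to a minimum-cost integral $s$--$t$ flow is a correct and standard way to establish it; the correspondence between value-$F$ integral flows saturating the auxiliary arcs and arc sets $A'$ with $d^+_{A'}(v)-d^-_{A'}(v)=\sigma(v)$ is exactly right, feasibility is witnessed by $A'=A(D)$, and the appeal to polynomial-time min-cost flow (with integrality guaranteed by total unimodularity) finishes the argument.

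One small wording issue in your closing parenthetical: as written it is ambiguous. If $x_a$ is the indicator of $a\in A'$ (so the equalities are $\sum_{a\in\delta^+(v)}x_a-\sum_{a\in\delta^-(v)}x_a=\sigma(v)$), then you minimize $\sum_a x_a$ and read off $A'=\{a:x_a=1\}$, not $x_a=0$. If instead you meant $x_a$ to indicate that $a$ is \emph{kept} and the ``flow-conservation equalities'' to be the circulation constraints $\sum_{a\in\delta^+(v)}x_a=\sum_{a\in\delta^-(v)}x_a$, then you must \emph{maximize} $\sum_a x_a$ to get $A'=\{a:x_a=0\}$. This is purely cosmetic and does not affect the validity of the main argument.
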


We are now ready to provide a proof for Theorem  \ref{thmKernel}.

\begin{proof}[Proof of Theorem \ref{thmKernel}]
We may assume that $(T,k)$ is a {\sc Yes}-instance of \OurProb. Indeed, our proof below always replaces $(T,k)$ with an equivalent instance and so if $(T,k)$ is a {\sc No}-instance, we will either arrive at a negative value of the parameter or at the kernel which is a {\sc No}-instance.

First note that if $T$ already contains a Eulerian strong component $C$, we may remove $C$ from $T$, as any minimal \DESC{}-set for $T$ is also a minimal \DESC{}-set for $T-C$, and vice versa.


Let $V(T)=\{v_1,v_2,\ldots,v_n\}$ and assume without loss of generality that $d^+(v_1) \geq d^+(v_2) \geq \cdots \geq d^+(v_n)$. We will now form subsets of these vertices, $Q_i$ and $W_i$ such that the indices of the vertices are exactly an interval.

Partition the vertices into sets $Q_1,\ldots ,Q_s$ as follows.
 Let $x_1 = 1$. Let $y_i = \max\{r: d^+(v_{x_i}) - d^+(v_r) \le k \}$. If $y_i < n$, let $x_{i+1} = y_i + 1$. For each $i$, let $Q_i = \{ v_{x_i}, v_{x_i + 1}, \ldots , v_{y_i}\}$.
Observe that $d^+(v_{x_i}) - d^+(v_{r}) > k$ for each $r\ge x_{i+1}$.


We first deal with the case in which $|Q_i| \ge 4k +3$, for some $Q_i$.
Let $z_i = \min \{r: d^+(v_r) - d^+(v_{y_i}) \le k\}$, and let $W_i = \{ v_{z_i}, v_{z_i + 1}, \dots v_{y_i} \}$.
Note that $W_i \supseteq Q_i$ and so $|W_i| \ge 4k+3$, and thus Lemma \ref{lem:split} implies that $W_i$ belongs to one strong component in ${T'}$.
Thus, for any vertex $v$ outside of $W_i$, either $d^+(v)-d^+(v_{y_i}) > k$ or $d^+(v_{x_i})-d^+(v) > k$, and so Lemma \ref{lem:same} implies that $v$ is not in the same strong component as $W_i$ in ${T'}$. Therefore $W_i$ is exactly the vertices of one strong component in ${T'}$.
Lemma \ref{lem:deg1} implies that $\{v_r\} \succ W_i$ for any $r < z_i$ and $W_i \succ \{v_r\}$ for any $r > y_i$. Hence, we delete any arc from $W_i$ to $v_r$ when $r < z_i$, and any arc from $v_r$ to $W_i$ when $r > y$ and reduce the parameter appropriately.


$W_i$ is now a strong component in the resulting digraph $T$. Let $p$ be the number of arcs of $A'$ in $T[W_i]$. Apply the polynomial algorithm of Lemma \ref{polyAlg} to remove the minimum number $p'$ of arcs from within $T[W_i]$ such that every vertex in $T[W_i]$ becomes balanced. Observe that $p'\le p$. After the removal of arcs, $T[W_i]$ will be a disjoint union of Eulerian digraphs and so it can be removed from $T$. We will also decrease $k$ by $p'$. Repeat the above process until  $|Q_i| \le 4k+2$ for every $Q_i$.

%


Next, observe that if there is an arc
 from $a\in Q_j$ to $b\in Q_i$, with $j\ge i+2$, then it must belong to $A'$,
since by Lemma~\ref{lem:same}, $a$ and $b$ are in different strong components of $T'$, and by Lemma~\ref{lem:deg1}, $a \succ b$ is not possible.
Therefore we delete any such arc from $T$ and put it in a set $B$ of arcs ($B$ is empty initially) and decrease $k$ appropriately. 
Finally, remove any strong Eulerian component from $T$.

Now we will show that the number of vertices still in $T$ is at most $4k(4k+2)$.
For every strong component $C$ in $T$,
let $l = \min \{i: Q_i \cap V(C) \neq \emptyset \}$ and $r = \max \{i: Q_i \cap V(C) \neq \emptyset \}$.
Consider a directed path $P$ from $Q_r$ to $Q_l$ (such a path exists as $C$ is strongly connected). 
Such a path must have an arc going from $Q_{i+1}$ to $Q_i$ for every $l \le i < r$.

Suppose $r \ge l + 2$ and consider an index $j$ such that $l\le j\le r-2$. Let $v_s$ be the last vertex of $Q_{j+2}$ on $P$ and $v_t$ the first vertex of $Q_j$ on $P$. It follows from the definition of $Q_i$'s that $|d^+(v_t)-d^+(v_s)|>k$ and so, by Lemma~\ref{lem:same}, $v_s$ and $v_t$ must belong to different strong components of $T'$. Thus, at least one arc in the subpath of $P$ from $v_s$ to $v_t$ must be removed to form $T'$.
This implies that at least $(r-l-1)/2$ arcs inside $C$ must be removed.

If $r < l+2$ then still at least one arc inside $C$ needs to be removed, as otherwise $C$ 
would be a Eulerian strong component and we would have removed it.

Therefore we need to remove at least $s/4$ arcs, where $s$ is the total number of sets $Q_i$. Hence we may assume that $s \le 4k$.
Recall that $|Q_i|\le 4k+2$ for each $i$ and, thus, the number of vertices still in $T$ is at most $4k(4k+2)$. 

Note that $T$ may not be a tournament, but we can turn it into one by adding all arcs from $B$ which have both vertices in $T$. We increase $k$ by the number of added arcs. Observe that the bound $4k(4k+2)$ on $|V(T)|$ remains valid.
 \end{proof}

\vspace{0.5cm}

\paragraph{Acknowledgments} We are grateful to the referees for several very useful suggestions and remarks.


\end{document}